\definecolor{red}{rgb}{1,0,0}
\newtheorem{thm}{Theorem}[section]
\newtheorem{defn}[thm]{Definition}
\newtheorem{prop}[thm]{Proposition}
\newtheorem{cor}[thm]{Corollary}
\newtheorem{lem}[thm]{Lemma}
\newtheorem{ex}[thm]{Example}
\newtheorem{rem}[thm]{Remark}
\def\mtx#1{\begin{bmatrix} #1 \end{bmatrix}} 
\def\ord#1{| #1 |} 
\def\lie #1{\langle #1 \rangle_{[\cdot,\cdot]}} 
\newcommand{\R}{\mathbb{R}}
\newcommand{\C}{\mathbb{C}}
\newcommand{\Cnn}{\C^{n\times n}}
\newcommand{\Cn}{\C^{n}}
\newcommand{\Rn}{\R^{n}}
\newcommand{\Rnn}{\R^{n\times n}}
\newcommand{\G}{\mathcal{G}}
\newcommand{\LL}{\mathcal{L}}
\newcommand{\KK}{\mathcal{K}}
\newcommand{\Hn}{\mathcal{H}_n}
\newcommand{\sym}{\mathcal{S}}
\newcommand{\W}{\widetilde W}
\newcommand{\rank}{\operatorname{rank}} 
\newcommand{\range}{\operatorname{range}}
\newcommand{\Span}{\operatorname{span}}
\newcommand{\Z}{\operatorname{Z}}
\newcommand{\tr}{\operatorname{tr}}
\newcommand{\bx}{{\bf x}}
\newcommand{\by}{{\bf y}}
\newcommand{\bz}{{\bf z}}
\newcommand{\bu}{{\bf u}}
\newcommand{\bb}{{\bf b}}
\newcommand{\be}{{\bf e}}
\newcommand{\bit}{\begin{itemize}}
\newcommand{\eit}{\end{itemize}}
\newcommand{\ben}{\begin{enumerate}}
\newcommand{\een}{\end{enumerate}}
\newcommand{\beq}{\begin{equation}}
\newcommand{\eeq}{\end{equation}}
\newcommand{\bea}{\begin{eqnarray*}}
\newcommand{\eea}{\end{eqnarray*}}
\newcommand{\bean}{\begin{eqnarray}}
\newcommand{\eean}{\end{eqnarray}}
\newcommand{\bpf}{\begin{proof}}
\newcommand{\epf}{\end{proof}\ms}
\newcommand{\ms}{\medskip}
\newcommand{\x}{\times}
\begin{document}

\title{Zero forcing, linear and quantum controllability for systems evolving on networks}

\author{Daniel Burgarth\thanks{Institute of Mathematics and Physics, Aberystwyth University, SY23 3BZ Aberystwyth, United Kingdom; \texttt{daniel@burgarth.de}.}\and  Domenico D'Alessandro\thanks{Department of Mathematics, Iowa State University, Ames, IA 50011, USA; \texttt{dmdaless@gmail.com}.} \and Leslie Hogben\thanks{Department of Mathematics, Iowa State University, Ames, IA 50011, USA, and American Institute of Mathematics, 360 Portage Ave, Palo Alto, CA 94306, USA; \texttt{lhogben@iastate.edu; hogben@aimath.org}.} \and Simone Severini\thanks{Department of Computer Science, and Department of Physics \& Astronomy,
University College London, WC1E 6BT London, United Kingdom; \texttt{simoseve@gmail.com}.} \and
Michael Young\thanks{Department of Mathematics, Iowa State University, Ames, IA 50011, USA; \texttt{myoung@iastate.edu}.}}

\maketitle

\abstract{We study the dynamics of systems on networks from a linear algebraic
perspective. The control theoretic concept of {\em controllability}
describes the set of states that can be reached for these systems.
Under appropriate conditions, there is a connection between the
quantum (Lie theoretic) property of controllability and the linear
systems (Kalman) controllability condition. We investigate how the
graph theoretic concept of a zero forcing set impacts the
controllability property. In particular, we prove that if a set of
vertices is a zero forcing set, the associated dynamical
system is controllable. The results open up the possibility of further
exploiting the analogy between networks, linear control systems
theory, and quantum systems Lie algebraic theory. This study is
motivated by several quantum systems currently under study, including
continuous quantum walks modeling transport phenomena. Additionally,
it proposes zero forcing as a new notion in the analysis of complex
networks.}

\section{Introduction}\label{sintro}


This paper deals with several concepts from different fields such as linear algebra,  graph theory and quantum and classical (linear) control theory. In the context of dynamics and control of systems on networks, it establishes a connection between a notion of graph theory (zero forcing) and concepts in control theory (quantum and classical controllability). We review these different concepts before we introduce the technical content of the paper and give  physical motivation for our study.

\subsection{Background}

For a  dynamical system with a {\it control input},  the property of {\it
controllability} describes to what extent one can go from one state
to another with the evolution corresponding to an appropriate choice
of the control. If all the possible state transfers can be obtained
within a natural set (the phase space), then the system is said to
be {\it controllable}.

For several classes of systems, controllability has been described
in detail and controllability tests are known. In particular,  for
{\it linear systems}
\begin{equation}\label{sistema12}
\dot \bx = A \bx +\sum_{j=1}^s \bb_j u_j,
\end{equation}
$A \in \Rnn$, $\bb_j \in \Rn$, $j=1,2,\ldots,s$, where both the
state $\bx \in \Rn $ and the control functions $u_j=u_j(t)$ enter
the right hand side linearly, several equivalent conditions of
controllability are known. The classical Kalman controllability
condition (see, e.g., \cite{Kailath}) says the system
(\ref{sistema12}) is controllable if and only if the $n \times (ns)$
matrix
\[\W(A,B):=[\bb_1, A\bb_1,\dots,A^{n-1}\bb_1,\dots,\bb_s, A\bb_s,\dots,A^{n-1}\bb_s], \]
has full rank $n$, where $B:=[\bb_1 \quad \bb_2 \quad \cdots
\bb_s]$. In this case, for any prescribed state transfer $\bx_0
\rightarrow \bx_1 (\in \Rn)$ 
and interval $[0,T]$, there exists a control $\bu(t)=[u_1,\ldots,u_s]^T$ such that
the corresponding
solution $\bx(t)$ of (\ref{sistema12})  satisfies  $\bx(0)=\bx_0$ and
$\bx(T)=\bx_1$.
For quantum mechanical systems which are closed (i.e., not
interacting with the environment) and finite dimensional, one
considers the {\it Schr\"odinger equation}
\begin{equation}\label{Scriding}
i \frac{d}{dt} |\psi \rangle = H(\bu) |\psi \rangle,
\end{equation}
where $|\psi \rangle \in \Cn$ is the quantum state and the
Hamiltonian matrix $H=H(\bu)$ is Hermitian and depends on a control
$\bu=\bu(t)$ which in some cases can be assumed to be a switch between
different Hamiltonians. If (\ref{Scriding}) is a system linear in
the state $|\psi\rangle$, the solution of (\ref{Scriding}) is
$|\psi(t) \rangle =X(t)|\psi(0)\rangle$ where $X=X(t)$ is the solution of the
{\it Schr\"odinger matrix equation}
\begin{equation}\label{scromat}
i\dot X= H(\bu) X
\end{equation}
with initial condition equal to the $n \times n$ identity matrix $I_n$.
Since $H=H(\bu)$ is Hermitian for every value of $\bu$ and therefore
$-iH$ is skew-Hermitian, the solution of (\ref{scromat}) is forced
to be unitary at every time $t$. In this context, the system is
called completely controllable if for any unitary matrix $X_f$ in
$SU(n)$\footnote{Following
standard notation,  $SU(n)$ is the special unitary group, i.e.,
the matrix group of $n \times n$ unitary matrices having determinant 1.} there exists a control function $\bu=\bu(t)$ and an interval
$[0,T]$ such that the corresponding solution $X=X(t)$ of
(\ref{scromat}) satisfies  $X(0)=I_n$ and $X(T)=X_f$.

At the beginning of the development of the theory of quantum
control, it was realized  (see e.g., \cite{Tarn}) that system
(\ref{scromat}) has a structure familiar in geometric control theory
\cite{Jurdjevic} and therefore controllability conditions developed
there can be directly applied. In particular, the Lie algebra rank
condition \cite{JS} says that a necessary and sufficient condition for
complete controllability of system (\ref{scromat}) is that the Lie
algebra generated by the matrices $\{i H(\bu) \}$ (as $\bu$ varies in
the set of admissible values for the control) is $su(n)$ or $u(n)$.\footnote{Following
standard notation, $u(n)$ is the Lie algebra of $n\times n$ skew-Hermitian matrices and  $su(n)$ is
the Lie algebra of $n \times n$ skew-Hermitian matrices with zero
  trace.} 
  This has
given rise to a comprehensive approach to  quantum control based on
the application of techniques of  Lie algebras and Lie group
theory \cite{mybook}.

In recent years there has been considerable interest in the study of
control systems, both classical and quantum, which are naturally
modeled on networks. Often one tries to relate the controllability
of these systems to topological or graph theoretic properties of the
network. For quantum systems, the nodes of the network may represent
energy levels or particles which are interacting with each other.
For these systems, the application of the Lie algebra rank condition
to determine controllability can become cumbersome and subject to
errors when the dimension of the system becomes large. It is
preferable to have criteria  based on graph theoretic properties of
the network not only because they are typically checked more
efficiently but also because they give more insight in the dynamics
of the system. Work in this direction has been done in
\cite{AlbDal}, \cite{graphinfect}, \cite{Turinici}. In this context,
a relevant property of a graph $G$ and a subset $S$ of its vertices
is the capability of this set to `infect' all the vertices of the
graph, as explained in the next paragraph. 

Every graph discussed is simple (no loops or multiple edges), undirected, and has a finite nonempty vertex set.
 Consider a graph $G$ and color each of its vertices  black or white. A
vertex $v$ is said to {\em infect}, or {\em force} a vertex $w$ if $v$ is black,
$w$ is white, $w$ is a neighbor of $v$, and $w$ is the only
white neighbor of $v$. In the case where infection of $w$ has
occurred, we change the color of $w$ to black and continue the iterative
procedure. The set $S$ is called a \emph{zero forcing set} if this procedure,
starting from a graph where only the vertices in $S$ are black,  leads to a graph where
{\it all} vertices are black.  An example of a zero forcing (infection) process is shown in Figure \ref{zfs}, indicated by arrows; the set of black vertices is a zero forcing set.
\begin{figure}[!ht]
\begin{center}\scalebox{.3}{\includegraphics{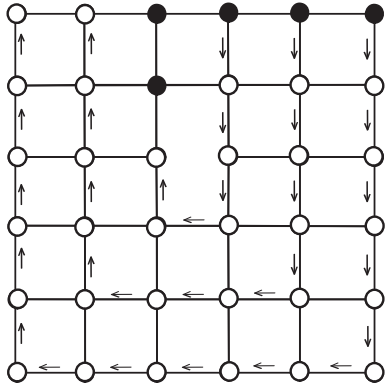}}
\caption{A zero forcing set and the process by which it can infect all vertices.}\label{zfs}
\end{center}
\end{figure}

  For a real symmetric $n\x n$ matrix $A=[a_{kj}]$, the {\em graph} of $A$, denoted $\G(A)$, is the
graph with vertices
$\{1,\dots,n \}$  and edges $\{ kj : a_{kj} \ne 0 \mbox{ and } k
\ne j \}$.
Observe that $G=\G(A_G)=\G(L_G)$, where $A_G$ and $L_G=D_G-A_G$ denote the adjacency matrix of $G$ and the Laplacian matrix of $G$, respectively (here $D_G$ is the diagonal  matrix of degrees).
Zero forcing has been studied in detail in the context of linear algebra. This is because the size of the minimum zero forcing set of a given graph $G$, which is called the \emph{zero forcing number} $\Z(G)$, is an upper bound to the maximum nullity (or maximum co-rank) over any field of $G$ \cite{AIM}; the maximum nullity is taken over all symmetric matrices $A$ such that $\G(A)=G$ (see  \cite{FH07} for background on the problem of determining maximum nullity).

Zero forcing appears then to be a valuable concept  in the study of graph-theoretic properties that are captured by generalized adjacency matrices. Indeed, there are important classical parameters introduced with this purpose, {e.g.}, the Colin de Verdi\'{e}re number, the Haemers bound, {etc.} It has to be remarked that questions about the maximum nullity of a graph are generally difficult problems and the zero forcing number does not constitute an exception: it was shown in \cite{AA08} that there is no poly-logarithmic approximation algorithm for the zero forcing number.

\subsection{Contribution of the paper and physical motivation}

 In this paper, we consider the dynamics of a system defined on a
 network and relate the above notions and criteria of
 controllability with the graph theoretic concept of zero forcing. Abstractly,
 we consider a graph $G$ and a subset $S=\{j_1,\dots,j_s\}$ of its vertices $V(G)=\{1,\dots,n\}$. The
 dynamics are that  of a quantum system (\ref{scromat}) where the
 Hamiltonian is allowed to take the values $\{ A, \be_{j_1} \be_{j_1}^T,
 \ldots, \be_{j_s} \be_{j_s}^T\}$.  Here $A$ is the adjacency matrix $A_G$ of $G$, Laplacian matrix $L_G$ of $G$, or more generally a real symmetric matrix such that $\G(A)=G$ with all nonzero off-diagonal entries of $A$ having the same sign  (which is the
typical situation in  transport models). The vectors $\{\be_{j_1}, \ldots, \be_{j_s}\}$ are the
 characteristic vectors\footnote{The vector $\be_j$
 has the $j$th entry equal to one and every other entry equal to zero and is also called the $j$th {\em standard basis vector}.} of the vertices in $S$. In this way, we can associate a linear
 system (\ref{sistema12}) with $A$ and $\bb_1=\be_{j_1},\ldots, \bb_s=\be_{j_s}$. The main result of the present
 paper says that  controllability in the quantum sense, expressed by the
 Lie algebra rank condition, and   controllability in the sense of
 linear systems, expressed by the Kalman rank condition, are
 equivalent conditions. Moreover, if
 the set $S$ (corresponding to $\be_{j_1}, \ldots, \be_{j_s}$) is a zero
 forcing set, then these  equivalent controllability conditions are true (the converse is false). The first of these results is along the same lines as the main result of \cite{GS10} which considers the case of quantum dynamics
 switching between the Hamiltonian $A$ 
 and
 $\bz \bz^T$,  where $\bz=\sum_{j\in S} \be_j$, and establishes the
 connection between controllability (quantum and linear). 
 As  mentioned above, these characterizations avoid lengthy calculations of the Lie algebra generated by a given set of Hamiltonians and replace them with more easily verified graph theoretic and linear algebra tests.

On  physical grounds,  our motivation for considering a Hamiltonian specified by a matrix with the given graph   comes from the study of {\it continuous time quantum walks} which model transport phenomena in many physical and  biological systems \cite{Caruso11}. A recent review is given in \cite{Blumen}. Most of the studies  consider this sole Hamiltonian and concern the statistical (diffusion) properties of the dynamics. We add here the Hamiltonians $\be_j \be_j^T$ where $\be_j$ is the characteristic vector of a given node of the network  and study the nature of the states that the resulting dynamics can achieve, in particular whether an arbitrary (unitary) state transfer can be achieved between the states of the quantum system. The Hamiltonians $\be_j\be_j^T$ model a prescribed energy difference between the corresponding node and all the other nodes of the network which are assumed to be at the same energy level. Therefore the dynamics
is the alternating of a diffusion process (modeled by the Hamiltonian $A$) and a rearrangement of the energies of the various states by selecting one of the states as high energy state and all the other at the same (lower) energy.

Theoretical research in network theory has focused on a number of discrete time, deterministic diffusion processes on graphs. While zero forcing has not been studied in this context, there are two directions of research that are closely related: as it was already noted in \cite{AA08}, the threshold model introduced for studying influence in social networks shares with zero forcing certain issues underlying its computational complexity \cite{KKT03}; the model of complex networks controllability recently proposed in \cite{LSB11} also makes a natural use of the Kalman rank condition and it singles out certain combinatorial properties to determine when the condition is satisfied. Determining whether zero forcing has a place in the metrology of complex networks is a point worth further interest.

The paper is organized as follows. In Section \ref{sprelim} we introduce notation and give
background and basic results  concerning Lie algebras that will
be used in the following sections.   The connection between quantum (Lie
algebraic) controllability and the Kalman criterion for linear systems is
established in Section \ref{scont}.  There we also prove  the  converse of the main result of \cite{GS10}.  The relation with the zero
forcing property is established in Section \ref{Zerforc}, and  Section \ref{send} contains concluding remarks.

\section{Lie algebra terminology and preliminary results}\label{sprelim}

Standard material on Lie algebras can be found in \cite{Humph}.  For $A_1,\dots,A_k\in\Cnn$,  $\lie{A_1,\dots,A_k}$ denotes the real Lie algebra generated by $A_1,\dots,A_k$ under addition, real scalar multiplication, and the commutator operation.  Let $\Hn(\R)$ denote the real vector space of symmetric matrices.  For $A\in\Hn(\R)$, the notation  $A=[a_{kj}]$ means for $k<j$ the $(k,j)$ and $(j,k)$ entries of $A$ are both $a_{kj}$.   Observe that $A=[a_{kj}]\in\Hn(\R)$ can be expressed as
\[A=\sum_{k=1}^na_{kk}\be_k{\be_k}^T+\sum_{k<j}a_{kj}(\be_k{\be_j}^T+\be_j{\be_k}^T).  \]
The following proposition is well known (a proof appears in \cite{GS10}).  It provides a link between an appropriate
 Lie algebra of real matrices and the Lie algebra rank condition of quantum control theory, thereby allowing us to work with real matrices only.  Recall that the Lie algebra consisting of all real $n\x n$ matrices is denoted by $gl(n,\R)$,  $sl(n,\R)$ denotes the Lie algebra of real $n\times n$  matrices with zero trace, $u(n)$ denotes the Lie algebra of all skew-Hermitian (complex) $n\times n$ matrices, and $su(n)$ denotes the Lie algebra of all skew-Hermitian (complex) $n\times n$ matrices with zero trace.  All these Lie algebras are considered as vector spaces over the field of real numbers.
 \begin{prop}\label{cplxeq} For $A_1,\dots,A_k\in\Hn(\R)$, \[\lie{A_1,\dots,A_k}=gl(n,\R) \iff \lie{iA_1,\dots,iA_k}=u(n). \]
\end{prop}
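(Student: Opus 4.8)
The plan is to exploit the decomposition $gl(n,\R) = sl(n,\R) \oplus \R I_n$ and the analogous $u(n) = su(n) \oplus \R(iI_n)$, and to set up an explicit linear correspondence between the real Lie algebra generated by the $A_j$ and the one generated by the $iA_j$. The first thing I would do is record the commutation relations: for real symmetric matrices $S, T$, the commutator $[S,T] = ST - TS$ is real \emph{antisymmetric}, and $[iS, iT] = -[S,T]$, which is then $i$ times a real antisymmetric matrix; conversely $[S, K]$ for $S$ symmetric and $K$ antisymmetric is symmetric, matching $[iS, K] = i[S,K]$. So inside $gl(n,\R)$, iterated brackets of symmetric matrices alternate between the symmetric and antisymmetric subspaces according to bracket depth parity, and multiplying each generator by $i$ simply decorates the symmetric-part summands with a factor $i$ and leaves the antisymmetric-part summands real. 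This suggests defining the real-linear bijection $\varphi$ on $\Hn(\R) \oplus \mathcal{A}_n(\R)$ (symmetric plus antisymmetric) by $\varphi(S + K) = iS + K$, and checking that $\varphi$ intertwines the two bracket structures up to signs that wash out — i.e. that $\varphi$ carries $\lie{A_1,\dots,A_k}$ onto $\lie{iA_1,\dots,iA_k}$.

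Concretely, for the forward direction I would assume $\lie{A_1,\dots,A_k} = gl(n,\R)$. Then in particular $\lie{A_1,\dots,A_k}$ contains all of $\mathcal{A}_n(\R)$ and all of $\Hn(\R)$. Using the alternating-parity observation, every antisymmetric matrix is a real-linear combination of iterated brackets of even depth in the $A_j$, hence lies in $\lie{iA_1,\dots,iA_k}$ as well (even-depth brackets pick up $(i)^{\text{even}}$, but more carefully: each even-depth bracket of the $iA_j$ equals $\pm$ the corresponding bracket of the $A_j$, a real antisymmetric matrix), and every symmetric matrix $S$ is a real-linear combination of odd-depth iterated brackets, whose $i$-analogues equal $\pm i$ times the original, so $iS \in \lie{iA_1,\dots,iA_k}$ for all symmetric $S$; in particular $iI_n$ is there. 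Since $\lie{iA_1,\dots,iA_k} \subseteq u(n)$ always (the $iA_j$ are skew-Hermitian and $u(n)$ is closed under bracket and real combination), and we have just shown it contains $\mathcal{A}_n(\R) + i\Hn(\R) = u(n)$, equality follows. The reverse direction is symmetric: from $\lie{iA_1,\dots,iA_k} = u(n)$, split an arbitrary element of $u(n)$ as (real antisymmetric) $+\,i(\text{real symmetric})$, pull each piece back through the parity argument to an iterated bracket of the $A_j$ (real, of the appropriate parity), and conclude $\lie{A_1,\dots,A_k} \supseteq \mathcal{A}_n(\R) + \Hn(\R) = gl(n,\R)$; the reverse inclusion is trivial.

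The main obstacle, and the part deserving care rather than cleverness, is bookkeeping the signs and the parity statement precisely: one must verify by induction on bracket depth $d$ that a depth-$d$ iterated commutator $[A_{i_1},[A_{i_2},[\dots]]]$ lands in $\Hn(\R)$ when $d$ is odd and in $\mathcal{A}_n(\R)$ when $d$ is even (with $d=0$, i.e. the generators themselves, being symmetric — so my parity labels should be shifted accordingly), and simultaneously that replacing every $A_{i_\ell}$ by $iA_{i_\ell}$ multiplies the whole bracket by $i^{\,c(d)}$ for an explicit exponent $c(d) \in \{1,2\}$ depending only on $d \bmod 2$, reduced mod the fact that $i^2=-1$ only contributes a harmless real sign. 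Once that lemma is nailed down, both implications are immediate from $gl(n,\R) = \Hn(\R)\oplus\mathcal{A}_n(\R)$ and $u(n) = i\Hn(\R)\oplus\mathcal{A}_n(\R)$ together with the always-true containments $\lie{A_1,\dots,A_k}\subseteq gl(n,\R)$ and $\lie{iA_1,\dots,iA_k}\subseteq u(n)$. Since the paper attributes the proof to \cite{GS10}, I would present this parity-correspondence argument compactly and cite that source for the routine verification.
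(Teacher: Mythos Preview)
The paper does not give its own proof of this proposition; it simply records it as well known and cites \cite{GS10}. Your parity-decomposition argument is correct and is the standard route: an iterated commutator of degree $d$ in symmetric generators lies in $\Hn(\R)$ when $d$ is odd and in the real antisymmetric matrices when $d$ is even (this holds for arbitrary bracketings, not just left-nested ones, by the easy induction you indicate), and replacing each $A_j$ by $iA_j$ multiplies the degree-$d$ bracket by $i^d$; hence $\lie{A_1,\dots,A_k}$ and $\lie{iA_1,\dots,iA_k}$ share the same antisymmetric component and have symmetric/$i$-symmetric components that correspond under multiplication by $i$, which immediately gives the equivalence via $gl(n,\R)=\Hn(\R)\oplus\mathcal{A}_n(\R)$ and $u(n)=i\Hn(\R)\oplus\mathcal{A}_n(\R)$.

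Two small points worth cleaning up. First, you write that $[iS,iT]=-[S,T]$ ``is then $i$ times a real antisymmetric matrix''; in fact $-[S,T]$ is itself real antisymmetric, with no factor of $i$ --- your later paragraph handles this correctly, so this is just a slip in the opening sentence. Second, the map $\varphi(S+K)=iS+K$ is \emph{not} a Lie algebra homomorphism (compute $[\varphi(S_1),\varphi(S_2)]=-[S_1,S_2]$ versus $\varphi([S_1,S_2])=[S_1,S_2]$), so the phrase ``intertwines the two bracket structures'' is misleading; what you actually use, and what is true, is only that $\varphi$ is a real-linear bijection carrying $\lie{A_1,\dots,A_k}$ onto $\lie{iA_1,\dots,iA_k}$, which follows from the degree-by-degree correspondence rather than from any homomorphism property.
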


The next lemma is used in the proof of Theorem \ref{mtxthm} in the next section.

\begin{lem}\label{sln}  Let $A,B_1,\dots,B_s\in\Hn(\R)$, with $s \geq 1$.  Define $\LL:=\lie{A,B_1,\dots,B_s}$ and let
 $\hat\LL$  denote the smallest ideal of $\LL$ that contains $B_i, i=1,\dots, s$.  If
 $\LL=gl(n,\R)$ and $\tr B_k\ne 0$ for some $B_k$, then $\hat\LL=gl(n,\R)$.
\end{lem}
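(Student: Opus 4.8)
The plan is to exploit the fact that $\LL=gl(n,\R)$ has only finitely many ideals and that each proper one is ruled out by the trace hypothesis. Since $\hat\LL$ is by construction an ideal of $\LL=gl(n,\R)$ containing $B_1,\dots,B_s$, it suffices to prove that the only ideal of $gl(n,\R)$ containing all the $B_i$ is $gl(n,\R)$ itself. The case $n=1$ is immediate: there $gl(1,\R)=\R$ is one-dimensional, $B_k\ne 0$ because $\tr B_k\ne 0$, and the only ideal of $\R$ meeting $\R\setminus\{0\}$ is $\R$. So from now on assume $n\ge 2$.

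First I would record the ideal structure of $gl(n,\R)$. Using the decomposition $gl(n,\R)=sl(n,\R)\oplus\R I_n$, the fact that $\R I_n$ is the center of $gl(n,\R)$, and the simplicity of $sl(n,\R)$ for $n\ge 2$, one gets that the only ideals of $gl(n,\R)$ are $\{0\}$, $\R I_n$, $sl(n,\R)$, and $gl(n,\R)$. The argument is short: if $\KK$ is a nonzero ideal, then, since every commutator has zero trace, $[\KK,gl(n,\R)]\subseteq sl(n,\R)$, so $\KK\cap sl(n,\R)$ is an ideal of $sl(n,\R)$ and hence equals $\{0\}$ or $sl(n,\R)$. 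If $\KK\cap sl(n,\R)=sl(n,\R)$ then $sl(n,\R)\subseteq\KK\subseteq gl(n,\R)$, and since the quotient $gl(n,\R)/sl(n,\R)$ is one-dimensional, $\KK$ is $sl(n,\R)$ or $gl(n,\R)$. If $\KK\cap sl(n,\R)=\{0\}$, then for every $X\in\KK$, writing $X=X_0+cI_n$ with $X_0\in sl(n,\R)$, we get $[X_0,Y]=[X,Y]\in\KK\cap sl(n,\R)=\{0\}$ for all $Y$, so $X_0$ is central, forcing $X_0\in sl(n,\R)\cap\R I_n=\{0\}$; thus $\KK\subseteq\R I_n$ and $\KK=\R I_n$ since it is nonzero.

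Finally I would eliminate the three proper ideals as candidates for $\hat\LL$. Because $\tr B_k\ne 0$ we have $B_k\ne 0$, so $\hat\LL\ne\{0\}$, and $B_k\notin sl(n,\R)$, so $\hat\LL\ne sl(n,\R)$. The only remaining possibility is $\hat\LL=\R I_n$, and this is the one step that genuinely uses $\LL=gl(n,\R)$ rather than merely $\tr B_k\ne0$: if every $B_i$ lay in $\R I_n$, then each $B_i$ would be central and so would contribute nothing to iterated brackets, whence $\LL=\lie{A,B_1,\dots,B_s}\subseteq\Span\{A,I_n\}$, a space of dimension at most $2<n^2=\dim gl(n,\R)$ --- contradicting $\LL=gl(n,\R)$. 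Therefore $\hat\LL=gl(n,\R)$. The only real obstacle here is keeping the ideal classification straight and noticing that the central case is defeated purely by a dimension count; no Lie-theoretic input is needed beyond simplicity of $sl(n,\R)$.
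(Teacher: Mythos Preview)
Your proof is correct. Both arguments ultimately rest on the simplicity of $sl(n,\R)$, but they are organized differently. The paper does not classify the ideals of $gl(n,\R)$; instead it observes that $\LL=\Span(A)+\hat\LL$ (any iterated bracket of the generators that is not a scalar multiple of $A$ involves some $B_i$ and hence lies in the ideal $\hat\LL$), and then computes
\[
sl(n,\R)=[gl(n,\R),gl(n,\R)]=[\Span(A)+\hat\LL,\Span(A)+\hat\LL]\subseteq\hat\LL,
\]
finishing with the single observation $B_k\in\hat\LL\setminus sl(n,\R)$ to force $\dim\hat\LL\ge n^2$. Your approach trades that direct containment for a complete list of the ideals $\{0\},\ \R I_n,\ sl(n,\R),\ gl(n,\R)$ and then rules each proper one out; the elimination of $\R I_n$ by the dimension count $\dim\Span\{A,I_n\}\le 2$ is essentially the same ``$\LL=\Span(A)+\hat\LL$'' idea specialized to the central case. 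The paper's route is a bit shorter since it never needs to handle $\R I_n$ separately, while your classification makes explicit exactly which hypothesis kills which candidate ideal.
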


\bpf    For $n=1$ the result is clear, so assume $n\ge 2$, $\LL=gl(n,\R)$, and $\tr B_k\ne 0$ for some $B_k$. Observe that $\LL:=\lie{A,B_1,\dots,B_s}$ is spanned by $A$ and $\hat\LL$. 
Since $\LL=gl(n,\R)$,
we have
  \[
  [gl(n,\R),gl(n,\R)]=[\Span(A)\, + \, \hat \LL \, , \, \Span(A)\, + \, \hat \LL ] \subseteq \hat\LL.
  \]
  It is known that $[gl(n,\R), gl(n,\R)]=sl(n,\R)$, because $[gl(n, \R), gl(n,\R)]$  is a nonzero ideal in $sl(n,\R)$ and  $sl(n,\R)$ is a simple Lie algebra.  Since $\dim sl(n,\R)=n^2-1$ and $B_k\not\in sl(n,\R)$, $\dim \hat\LL \ge n^2$. Thus $\hat\LL=gl(n,\R)$.  \epf

The next lemma is used in the proof of Theorem \ref{MainT} in the next section.  Let $\LL$ be a Lie algebra, $A\in \LL$, and let $\KK$ be a subspace of $\LL$.  Recall that the
operation $ad_A$ is defined as $ad_A(B):=[A,B]$, and the {\em normalizer}   of $\KK$ is
\[N_{\LL}(\KK)=\{A\, : \, [A,B]\in\KK \mbox{ for all } B\in\KK\}.\]
It follows from the Jacobi identity that $N_{\LL}(\KK)$ is a subalgebra of $\KK$ \cite[p. 7]{Humph}.

\begin{lem}\label{Lemmaj} Let $A, L\in\Hn(\R)$.  Assume $\lie{iA, iL}=u(n)$ and define
 \begin{equation}\label{defin}
\sym:=\Span(\{ad_{iA}^{k_1}\, ad_{iL}^{k_2} \, \cdots \, ad_{iA}^{k_{s-1}}\,
ad_{iL}^{k_{s}}\, [iA,iL]\}), \end{equation}
 where $s$ and $k_1,\ldots,k_s$
are nonnegative integers. Then $\sym=su(n)$.
\end{lem}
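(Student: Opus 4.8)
The plan is to show that $\sym$ is a nonzero ideal of $u(n)$ contained in $su(n)$, and then invoke simplicity of $su(n)$ to conclude $\sym = su(n)$. First I would observe that $\sym \ne \{0\}$: since $\lie{iA,iL} = u(n)$, the generators $iA$ and $iL$ cannot commute (otherwise the generated Lie algebra would be abelian, hence not all of $u(n)$ for $n \ge 2$; the case $n=1$ being trivial since then $[iA,iL]=0$ would force $u(1)$ to be spanned by... actually for $n=1$, $u(1)$ is $1$-dimensional and $su(1)=\{0\}$, so one should check this degenerate case separately or note it is vacuous). Thus $[iA,iL] \ne 0$ and $\sym$ contains a nonzero element.

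Next I would show $\sym \subseteq su(n)$. Every element in the spanning set of $\sym$ is obtained from the commutator $[iA,iL]$ by repeatedly applying $ad_{iA}$ and $ad_{iL}$. Since $iA, iL \in u(n)$ and $[iA,iL] \in [u(n),u(n)] = su(n)$ (the trace of a commutator vanishes), and since $su(n)$ is an ideal of $u(n)$, each further bracket with $iA$ or $iL$ keeps us inside $su(n)$. Hence $\sym \subseteq su(n)$.

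The main step is to show $\sym$ is an ideal of $u(n)$, i.e., $[u(n),\sym] \subseteq \sym$. Here I would argue that $\sym$ is $ad_{iA}$-invariant and $ad_{iL}$-invariant directly from its definition as the span of all words $ad_{iA}^{k_1} ad_{iL}^{k_2}\cdots ad_{iA}^{k_{s-1}} ad_{iL}^{k_s}[iA,iL]$: applying $ad_{iA}$ to such a word either prepends an $ad_{iA}$ (if the word currently starts with an $ad_{iL}$ block or is the bare commutator) or increments $k_1$, and similarly for $ad_{iL}$; in all cases the result is again in the spanning set, so $[iA,\sym]\subseteq\sym$ and $[iL,\sym]\subseteq\sym$. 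Consequently $\sym \subseteq N_{u(n)}(\sym)$, and since the normalizer is a subalgebra and is closed under brackets, and it contains the generators $iA, iL$ of $u(n)$, we get $N_{u(n)}(\sym) = u(n)$; that is, $\sym$ is an ideal of $u(n)$.

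Finally, $su(n)$ is a simple Lie algebra for $n \ge 2$, so its only ideals are $\{0\}$ and $su(n)$ itself; a nonzero ideal of $u(n)$ that is contained in $su(n)$ is in particular a nonzero ideal of $su(n)$ (since $[su(n),\sym]\subseteq[u(n),\sym]\subseteq\sym$), hence equals $su(n)$. Therefore $\sym = su(n)$. The one point requiring care is the bookkeeping in the ideal step — making sure that applying $ad_{iA}$ or $ad_{iL}$ to an arbitrary word of the prescribed form again yields a (single) word of the prescribed form rather than merely a linear combination, which is where the specific alternating shape of the index pattern in \eqref{defin} matters; once that is checked, everything else is immediate from simplicity of $su(n)$.
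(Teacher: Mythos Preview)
Your proposal is correct and follows essentially the same route as the paper: show $\sym\ne\{0\}$, $\sym\subseteq su(n)$, use that $iA,iL\in N_{u(n)}(\sym)$ together with the fact that the normalizer is a subalgebra to conclude $\sym$ is an ideal of $u(n)$, and finish by simplicity of $su(n)$. Your closing worry about the ``bookkeeping'' is unfounded: since the exponents $k_1,\dots,k_s$ are allowed to be zero, any finite word in $ad_{iA}$ and $ad_{iL}$ applied to $[iA,iL]$ already has the prescribed form (just pad with zero exponents), so prepending either $ad_{iA}$ or $ad_{iL}$ trivially stays in the spanning set.
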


\bpf   First note that $[iA,iL]\ne 0$  because we have assumed
that $iA$ and $iL$ generate $u(n)$. Clearly $iA, iL\in N_{u(n)}(\sym)$.  Since $N_{u(n)}(\sym)$ is a subalgebra of $u(n)$ and $iA$ and $iL$ generate $u(n)$, $N_{u(n)}(\sym)=u(n)$.  Thus $\sym$ is an ideal of $u(n)$.  Notice that $\sym \subseteq su(n)$ since $[iA,iL]$
is skew-Hermitian  with zero trace and $iA$ and $iL$ are
skew-Hermitian. Since $\sym$ is an ideal of $u(n)$, $\sym$ is an ideal of $su(n)$, and $\sym\ne \{ 0 \}$.  Since $su(n)$ is a simple Lie algebra, by
definition it has only the trivial ideals $\{ 0 \}$ and $su(n)$.
Therefore ${\cal S}=su(n)$. \epf

For $A\in\Hn(\R)$ and $Z=\{\bz_1,\dots,\bz_s\} \subset \Rn$, the {\em real Lie algebra generated by $A$ and $Z$} is defined as
\begin{equation}\label{E1}
\LL(A,Z):=\lie{A,\bz_1{\bz_1}^T,\dots,\bz_s{\bz_s}^T}.
 \end{equation}


\section{Controllability and walk matrices}\label{scont}

For $A\in\Hn(\R)$ and $Z=\{\bz_1,\dots,\bz_s\} \subset \Rn$, the  {\em extended walk matrix} of $A$ and $Z$ is the $n\x (ns)$ real matrix
\begin{equation}\label{E2}
\W(A,Z):=[\bz_1, A\bz_1,\dots,A^{n-1}\bz_1,\dots,\bz_s, A\bz_s,\dots,A^{n-1}\bz_s].
\end{equation}
A special case is when $Z=Z_S:=\{ \be_j: j \in S\}$ (with $\be_j$ denoting the $j$-th standard basis vector) for some subset $ S \subseteq V(G) $ for a graph $G$ and $A$ is the adjacency matrix $A_G$ of the graph. In this case, the relevant walk matrix is $\W(A_G, Z_S)$.

For $s=1$ the connection between the walk matrix $\W(A, Z)$ in (\ref{E2}) and the Lie algebra $\LL(A,Z)$ in (\ref{E1}) was studied in \cite{GS10}. It was shown \cite[Lemma 1]{GS10} that $\rank \W(A,\{\bz\})=n$ implies $\LL(A,\{\bz\})=gl(n,\R)$, or equivalently,  $\lie{iA,i\bz{\bz}^T}=u(n)$ (cf. Proposition \ref{cplxeq}). The next theorem states that the  converse of this result is also true.

\begin{thm}\label{MainT} Consider a matrix
$A$ in $\Hn(\R)$ and a vector $\bz \in \Rnn$. Then,
$\lie{iA,i\bz{\bz}^T}=u(n)$ (or equivalently
$\LL(A,\{\bz\})=gl(n,\R)$) implies that  $\rank \W(A,\{\bz\})=n$.
\end{thm}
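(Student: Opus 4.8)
The plan is to prove the contrapositive: if $\rank \W(A,\{\bz\}) < n$, then $\LL(A,\{\bz\}) \ne gl(n,\R)$ (equivalently $\lie{iA,i\bz\bz^T}\ne u(n)$). The key observation is that $\rank\W(A,\{\bz\})<n$ means the \emph{$A$-cyclic subspace} $\KK:=\Span\{\bz, A\bz, \dots, A^{n-1}\bz\}$ is a proper subspace of $\Rn$; by Cayley--Hamilton it is exactly the smallest $A$-invariant subspace containing $\bz$, and it is also invariant under every polynomial in $A$. I would then show that this proper invariant subspace is an obstruction: the whole Lie algebra $\LL(A,\{\bz\})$ must preserve $\KK$, so it cannot be all of $gl(n,\R)$ (which clearly does not preserve any proper nonzero subspace).

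\smallskip

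To make that precise, I would argue as follows. First, both generators $A$ and $\bz\bz^T$ leave $\KK$ invariant: $A\KK\subseteq\KK$ by construction, and $\bz\bz^T\bw = (\bz^T\bw)\bz \in \Span\{\bz\}\subseteq\KK$ for every $\bw$. Next, the set $\{M\in gl(n,\R): M\KK\subseteq\KK\}$ is a Lie subalgebra of $gl(n,\R)$ (it is closed under linear combinations and under the commutator, since if $M\KK\subseteq\KK$ and $N\KK\subseteq\KK$ then $[M,N]\KK=(MN-NM)\KK\subseteq\KK$). Since this subalgebra contains the generators $A$ and $\bz\bz^T$, it contains the Lie algebra they generate, namely $\LL(A,\{\bz\})$. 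Finally, since $\KK$ is a \emph{proper} nonzero subspace of $\Rn$, this subalgebra is proper in $gl(n,\R)$ — e.g., pick $\bw\notin\KK$ and $\mathbf{0}\ne\bv\in\KK$; then the rank-one matrix $\bw\bv^T$ sends $\bv\mapsto\|\bv\|^2\bw\notin\KK$, so $\bw\bv^T$ lies in $gl(n,\R)$ but not in the stabilizer of $\KK$. Hence $\LL(A,\{\bz\})\ne gl(n,\R)$, completing the contrapositive.

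\smallskip

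I expect the main work to be essentially bookkeeping rather than a deep obstacle, since the argument is just ``an invariant subspace is a Lie-algebraic obstruction.'' The one point requiring a little care is the identification of $\Span\{\bz,\dots,A^{n-1}\bz\}$ with the full $A$-cyclic subspace (so that it is genuinely $A$-invariant): this is the Cayley--Hamilton argument showing $A^n\bz$, and hence $A^m\bz$ for all $m\ge 0$, lies in that span. One should also note that $\bz\ne\mathbf 0$, so $\KK\ne\{0\}$ — if $\bz=\mathbf 0$ then $\W(A,\{\bz\})$ is the zero matrix and the hypothesis $\lie{iA,i\bz\bz^T}=u(n)$ forces $n=0$, a vacuous case. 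A cosmetic remark: the theorem as stated writes $\bz\in\Rnn$, which is surely a typo for $\bz\in\Rn$, and the proof should be read accordingly.
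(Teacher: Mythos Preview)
Your argument is correct and is genuinely different from the paper's proof. The paper works on the complex side: assuming $\lie{iA,iL}=u(n)$ with $L=\bz\bz^T$, it invokes Lemma~\ref{Lemmaj} to identify the iterated-commutator span $\sym$ with $su(n)$, then picks $\bx$ with $\bx^*\W(A,\{\bz\})=0$, sets $D=\bx\bx^*$, checks $DM=MD=0$ for every monomial $M$ in $\sym$, and concludes via Schur's Lemma (irreducibility of the defining representation of $su(n)$ on $\Cn$) that $D$ would have to be scalar, contradicting $\rank D=1$. Your route bypasses all of that: you stay over $\R$, observe that both generators preserve the Krylov subspace $\KK$, and note that the stabilizer of a proper nonzero subspace is a proper Lie subalgebra of $gl(n,\R)$. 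This is more elementary and self-contained --- no simplicity of $su(n)$, no Schur --- while the paper's version ties the result into the representation-theoretic framework used elsewhere in Section~\ref{scont}. Conceptually the two are dual: your $\KK$ is the column space of $\W(A,\{\bz\})$, and the paper's $\bx$ lies in its orthogonal complement, so both are detecting the same obstruction.

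One small correction to your edge-case remark: if $\bz=\mathbf 0$ then $\lie{iA,0}=\Span_\R\{iA\}$ has dimension at most $1$, which forces $n\le 1$ (not $n=0$). For $n=1$ with $A\ne 0$ and $\bz=0$ the hypothesis $\lie{iA,0}=u(1)$ actually holds while $\rank\W=0$; the paper's own ``the result is clear if $n=1$'' glosses over this too, so the theorem should be read with the tacit assumption $\bz\ne\mathbf 0$.
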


\begin{proof}
 The equivalence of the hypotheses is justified by Proposition \ref{cplxeq}.  The result is clear if $n=1$, so assume $n\ge 2$.
We
use a contradiction argument. Assume the rank of the walk matrix $\W(A,\{\bz\})$
 is  less than $n$ but $\lie{iA, iL}=u(n)$, where $L:=\bz\bz^T$. There exists a vector
 $\bx \in \Cn$ such that $\bx^*\W(A,\{\bz\})=0$.
Consider the rank $1$ matrix $D:=\bx\bx^*$. We claim that $D$ commutes with every
matrix in ${\cal S}$, where $\sym$ is as in (\ref{defin}). To see this, notice
that from (\ref{defin}), all elements in ${\cal S}$ are linear
combinations of monomials of the form $M=A^{k_1}L^{k_2}A^{k_3}\cdots
L^{k_{p-1}}A^{k_p}$, for some $p\geq 1$, $k_j \geq 0$,  and $L$
appearing at least once with exponent  greater  than zero.
When multiplying $D$ with $M$, with $D$ on the left,   write $M$ as
$A^{k_1}LY$ for some matrix $Y$, so we have \begin{equation}\label{zero1} DM=
DA^{k_1}LY =\bx\bx^*A^{k_1}\bz\bz^*Y=0, \end{equation} which follows
immediately from the condition $\bx^*\W(A,\{\bz\})=0$ for $n-1 \geq k_1\geq 0$, and by using
the Cayley-Hamilton theorem for $k_1 \geq n$. Analogously, when
multiplying $D$ on the right of $M$, we write $M$ as $QLA^{k_p}$,
for some matrix $Q$, and we have \begin{equation}\label{iop} MD= QLA^{k_p}D=
Q\bz\bz^*A^{k_p}\bx\bx^*= 0,  \end{equation} since $\bx^*A^{k_p}\bz=0$ also implies
$\bz^*A^{k_p}\bx=0$. Therefore $D$ commutes with all elements of ${\cal S}$.

Observe  that since $su(n)$ is simple, $su(n)$ is an
irreducible representation of $su(n)$.  Therefore, since  $D$
commutes with  all elements of ${\cal S}$, it follows from Schur's Lemma
that $D$ must be a scalar multiple of the identity \cite[p. 26]{Humph}.
 However this is not
possible since $D$ has rank $1$.  This gives the desired
contradiction and thus completes the proof.
\end{proof}

We study the generalization of this result to multiple vectors ($s \geq 1$)  but for matrices $A$ and vectors $\bz_1,\ldots,\bz_s$ related to a connected graph $G$. In particular, $\G(A)=G$,  all nonzero off-diagonal entries of $A$ have the same sign, and $\be_{j_1},\ldots,\be_{j_s}$ will be the characteristic vectors associated to a subset $S$ of the vertices.  In the next section we will   relate this to the zero forcing property of the set $S$. In the context of graphs, it is important to consider multiple vectors because if $G$ is a graph and $\rank A_G\le \ord G-2$, then $\rank \W(A_G,\{\bz\})<n$ for any one vector $\bz$. On the other hand we will see that if  $S$ is a zero forcing set for $G$ and $\G(A)=G$, then $\LL(A,\{\be_j:j\in S\})=\Hn(\R)$ (see Theorem \ref{DBzf} below).

The next definition extends the definition given in \cite{God10}  (and implicitly in \cite{GS10}) of an associative algebra  that links the walk matrix and controllability.

\begin{defn} {\rm For $A\in\Hn(\R)$ and 
$Z=\{\bz_1,\dots,\bz_s\} \subset \Rn$, define \[P(A,Z):=\{A^m\bz_k{\bz_j}^TA^\ell :  1\le k,j\le s, 0\le m,\ell\le n-1\}.\] }\
\end{defn}

\begin{rem}{\rm  For $A\in\Hn(\R)$ and
$Z=\{\bz_1,\dots,\bz_s\} \subset \Rn$,  the associative algebra generated by $P(A,Z)$ is equal to  $\Span P(A,Z)$, because
\[(A^m\bz_k{\bz_j}^TA^\ell)(A^g\bz_p{\bz_q}^TA^h)=({\bz_j}^TA^{\ell+g}\bz_p)A^m\bz_k{\bz_q}^TA^h\mbox{ and } {\bz_j}^TA^{\ell+g}\bz_p\in\R. \]}
\end{rem}

\begin{lem}\label{prod}  For $A\in\Hn(\R)$ and
$Z=\{\bz_1,\dots,\bz_s\} \subset \Rn$,
 $\rank \W(A,Z)=n$ if and only if $\Span P(A,Z)=\Rnn$. \footnote{As a vector space, $\Rnn$ is the same as $gl(n,\R)$. We use the latter notation when we want to stress the Lie algebra structure on $gl(n,\R)$.}
\end{lem}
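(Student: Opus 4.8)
The plan is to prove the two implications separately, translating the rank condition on $\W(A,Z)$ into a statement about which vectors lie in $\Span P(A,Z)$, viewed as a subspace of $\Rnn$.

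\medskip
\noindent\textbf{Setup.} First I would record the key reformulation: by the Cayley--Hamilton theorem, for each fixed $k$ the column space of the block $[\bz_k, A\bz_k, \dots, A^{n-1}\bz_k]$ equals the $A$-cyclic subspace $W_k := \Span\{A^m\bz_k : m\ge 0\}$, so $\range\W(A,Z) = W_1 + \dots + W_s$, and $\rank\W(A,Z)=n$ is equivalent to $W_1 + \dots + W_s = \Rn$. Likewise, Cayley--Hamilton lets me enlarge the exponent ranges in $P(A,Z)$ for free: $\Span P(A,Z) = \Span\{A^m\bz_k{\bz_j}^TA^\ell : k,j, m,\ell \ge 0\}$. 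Note that $A^m\bz_k{\bz_j}^TA^\ell = (A^m\bz_k)(A^\ell\bz_j)^T$ since $A$ is symmetric, so $\Span P(A,Z)$ is exactly the span of all rank-one matrices $\bu\bv^T$ with $\bu$ ranging over a spanning set of $W_1+\dots+W_s$ and $\bv$ over the same set.

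\medskip
\noindent\textbf{The two directions.} ($\Leftarrow$) Suppose $\Span P(A,Z) = \Rnn$. If $\rank\W(A,Z)<n$, pick $\bx\ne 0$ with $\bx^T(W_1+\dots+W_s)=0$, i.e. $\bx^TA^m\bz_k=0$ for all $k,m$. Then $\bx^T(A^m\bz_k{\bz_j}^TA^\ell)=0$ for every generator of $P(A,Z)$, so $\bx^TM=0$ for all $M\in\Span P(A,Z)=\Rnn$, which is absurd (take $M=\bx\by^T$ for any $\by$ with $\bx^T\by\ne 0$). Hence $\rank\W(A,Z)=n$.

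\medskip
($\Rightarrow$) Suppose $\rank\W(A,Z)=n$, so the vectors $\{A^m\bz_k : k=1,\dots,s,\ m=0,\dots,n-1\}$ span $\Rn$; extract from them a basis $\bu_1,\dots,\bu_n$, each of the form $A^{m_i}\bz_{k_i}$. For each pair $(i,i')$ the matrix $A^{m_i}\bz_{k_i}{\bz_{k_{i'}}}^TA^{m_{i'}} = \bu_i\bu_{i'}^T$ lies in $\Span P(A,Z)$. But $\{\bu_i\bu_{i'}^T : 1\le i,i'\le n\}$ is a basis of $\Rnn$ (it is the image under the invertible map $M\mapsto UMU^T$, $U=[\bu_1\,\cdots\,\bu_n]$, of the standard basis $\{\be_i\be_{i'}^T\}$). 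Therefore $\Span P(A,Z)=\Rnn$.

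\medskip
\noindent\textbf{Anticipated obstacle.} There is no deep obstacle here; the only care needed is the bookkeeping with Cayley--Hamilton to justify that bounding $m,\ell$ by $n-1$ in the definition of $P(A,Z)$ loses nothing, and the observation that the rank-one matrices $\bu_i\bu_{i'}^T$ built from a basis $\{\bu_i\}$ of $\Rn$ form a basis of $\Rnn$ — both of which are routine. If one prefers to avoid extracting a basis explicitly, the cleanest phrasing is via the surjective bilinear map $\Rn\times\Rn\to\Rnn$, $(\bu,\bv)\mapsto\bu\bv^T$, whose image of $(W_1+\dots+W_s)\times(W_1+\dots+W_s)$ spans $\Span P(A,Z)$ and equals $\Rnn$ exactly when $W_1+\dots+W_s=\Rn$.
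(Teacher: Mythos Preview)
Your proof is correct and follows essentially the same approach as the paper's: both rest on the observation that $\Span P(A,Z)$ is the span of rank-one matrices $\bu\bv^T$ with $\bu,\bv$ ranging over $\range\W(A,Z)$. The only cosmetic differences are that for $(\Rightarrow)$ the paper writes an arbitrary $M$ as a sum of rank-one terms and expands each factor in the columns of $\W$, whereas you extract a basis $\{\bu_i\}$ and note $\{\bu_i\bu_{i'}^T\}$ spans $\Rnn$; for $(\Leftarrow)$ the paper uses the dimension count $\dim\Span P(A,Z)\le r^2<n^2$ when $r=\rank\W(A,Z)<n$, whereas you produce an annihilating vector $\bx$ with $\bx^TM=0$ for all $M\in\Span P(A,Z)$.
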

\bpf  Clearly $\rank \W(A,Z)=n$ if and only if $\range \W(A,Z)=\Rn$.
First assume $\rank \W(A,Z)=n$. For any matrix $M\in\Rnn$ with $\rank M=r$, there exist vectors $\bx^{(q)},\by^{(q)}$, $q=1,\dots r$, such that $M=\sum_{q=1}^r \bx^{(q)}{\by^{(q)}}^T$.  Since $\range \W(A,Z)=\Rn$, each $\bx^{(q)}$ is expressible as a linear combination of the columns of $\W(A,Z)$, {i.e.}, as a linear combination of vectors of the form $A^m\bz_k$, and similarly for $\by^{(q)}$.  Thus each $\bx^{(q)}{\by^{(q)}}^T$, and hence $M$,  is expressible as a linear combination of $A^m\bz_k{\bz_j}^TA^\ell$. Thus the matrices of the form $A^m\bz_k{\bz_j}^TA^\ell$ span $\Rnn$.

  For the converse, observe that if $B=\{\bb_1,\dots,\bb_r\}$ is a basis for $\range \W(A,Z)$, then \[\Span P(A,Z)=\Span(\{\bb_k{\bb_j}^T: 1\le k,j\le r\}).\]  If $n>r=\rank \W(A,Z)$, then $\dim \Span P(A,Z)\le r^2<n^2=\dim\Rnn$, so the matrices in $P(A,Z)$ cannot span $\Rnn$.
\epf

The {\em distance} between two distinct vertices $u$ and $v$ of a connected graph $G$, denoted by $d(u,v)$ is the minimum number of edges in a path from $u$ to $v$.

\begin{lem}\label{connect}   Let $A\in\Hn(\R)$  such that   $\G(A)$ is connected and  all nonzero off-diagonal entries of $A$ have the same sign. 
If $k,j\in\{1,\dots,n\}$ and $k\ne j$, then $(A^{d(k,j)})_{kj}\ne 0$.
\end{lem}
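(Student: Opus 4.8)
The plan is to prove this by induction on the distance $d = d(k,j)$, exploiting the sign condition to rule out the cancellations that could otherwise make a matrix power have a zero entry.

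\textbf{Base case.} If $d(k,j)=1$, then $kj$ is an edge of $\G(A)$, which by definition of $\G(A)$ means $a_{kj}\ne 0$, so $(A^1)_{kj}=a_{kj}\ne 0$.

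\textbf{Inductive step.} Suppose the claim holds for all pairs of vertices at distance less than $d$, and let $d(k,j)=d\ge 2$. Write $(A^{d})_{kj}=\sum_{\ell=1}^{n}(A^{d-1})_{k\ell}\,a_{\ell j}$. A term $(A^{d-1})_{k\ell}a_{\ell j}$ is nonzero only if $a_{\ell j}\ne 0$, i.e. $\ell=j$ or $\ell$ is a neighbor of $j$; in either case $d(k,\ell)\ge d-1$ (by the triangle inequality, since an edge or equality from $\ell$ to $j$ changes distance from $k$ by at most one, and $d(k,j)=d$). The key observation is: I want to isolate the neighbors $\ell$ of $j$ with $d(k,\ell)=d-1$ and show (a) there is at least one such $\ell$ (take the second-to-last vertex on a shortest $k$--$j$ path), and (b) for every such $\ell$, $(A^{d-1})_{k\ell}\ne 0$ by the induction hypothesis, and (c) all the nonzero contributions $(A^{d-1})_{k\ell}a_{\ell j}$ over these $\ell$ have the \emph{same sign}, so no cancellation can occur among them. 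For (c), the point is that $(A^{d-1})_{k\ell}$, when it is nonzero, is a sum of products of off-diagonal entries of $A$ along walks of length $d-1$ from $k$ to $\ell$ together with diagonal entries; but in fact for $\ell$ with $d(k,\ell)=d-1$, every walk of length $d-1$ from $k$ to $\ell$ must be a shortest path (using each step to decrease distance, hence no repeated vertices and no diagonal steps), so $(A^{d-1})_{k\ell}$ is a sum of products of exactly $d-1$ off-diagonal entries of $A$. Since all nonzero off-diagonal entries of $A$ have the same sign $\varepsilon\in\{+,-\}$, each such product has sign $\varepsilon^{d-1}$, so $(A^{d-1})_{k\ell}$ has sign $\varepsilon^{d-1}$ (and is nonzero). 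Then $(A^{d-1})_{k\ell}a_{\ell j}$ has sign $\varepsilon^{d}$ for every neighbor $\ell$ of $j$ at distance $d-1$ from $k$. Meanwhile, any other nonzero term in $\sum_\ell (A^{d-1})_{k\ell}a_{\ell j}$ would require $a_{\ell j}\ne 0$ with $d(k,\ell)\ge d$; but then $d(k,\ell)\in\{d\}$ (it cannot exceed $d$ since $\ell$ is adjacent or equal to $j$ and $d(k,j)=d$), and a walk of length $d-1$ from $k$ to a vertex at distance $d$ is impossible, so $(A^{d-1})_{k\ell}=0$. Hence every nonzero term in the sum has sign $\varepsilon^{d}$, the sum is nonempty, and therefore $(A^{d})_{kj}\ne 0$.

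\textbf{Main obstacle.} The delicate point is the ``no cancellation'' argument: I must be careful that $(A^{d-1})_{k\ell}$ for $\ell$ at distance exactly $d-1$ genuinely picks up no diagonal entries and no ``backtracking'' contributions — i.e. that every length-$(d-1)$ walk from $k$ to such an $\ell$ is forced to be a geodesic. This follows because each of the $d-1$ steps of the walk can decrease the distance-to-$k$ by at most $1$, and the walk must achieve a total decrease of $d-1$ (from $0$ up to $d-1$), so every step strictly decreases the distance, forcing distinct vertices and nonzero off-diagonal factors. Once this structural fact is pinned down, the uniform sign of the off-diagonal entries of $A$ finishes the argument cleanly; without the sign hypothesis the terms could cancel and the conclusion would fail.
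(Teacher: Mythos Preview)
Your argument is correct and hinges on exactly the same observation the paper uses: any nonzero term in the expansion of $(A^{d})_{kj}$ over length-$d$ walks must come from a geodesic, so it is a product of $d$ off-diagonal entries and hence has fixed sign $\varepsilon^{d}$. The paper states this directly in two lines rather than wrapping it in an induction (which is in fact redundant here, since your sign computation for $(A^{d-1})_{k\ell}$ already re-derives the nonvanishing without invoking the inductive hypothesis); note also the harmless slip that a neighbor $\ell$ of $j$ can have $d(k,\ell)=d+1$, not only $\le d$, but your conclusion $(A^{d-1})_{k\ell}=0$ still holds whenever $d(k,\ell)\ge d$.
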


\bpf  Let $d:=d(k,j)$.  The entry $(A^{d})_{kj}$ is a  sum of terms which are each the  product of $d$ nonzero entries of $A$.  Since $d$ is the distance between $k$ and $j$, only off-diagonal entries can appear in this product.  Thus every term has the same sign and $(A^d)_{kj}\ne 0$.
\epf

\begin{lem}\label{lieprod}   Let $A\in\Hn(\R)$  be such that   $\G(A)$ is connected and all nonzero off-diagonal entries of $A$ have the same sign.  Let
$S\subseteq\{1,\dots,n\}$ and
$Z=\{\be_j:j\in S\}$ be the subset of standard basis vectors. 
Then
 $\Span P(A,Z)\subseteq\LL(A,Z)$.
\end{lem}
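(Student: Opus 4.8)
The plan is to show that every generator $A^m\be_k{\be_j}^TA^\ell$ of $\Span P(A,Z)$ (with $k,j\in S$, $0\le m,\ell\le n-1$) lies in the Lie algebra $\LL(A,Z)=\lie{A,\be_{j_1}{\be_{j_1}}^T,\dots,\be_{j_s}{\be_{j_s}}^T}$. The key observation is that for a rank-one symmetric generator $\be_k{\be_k}^T\in\LL(A,Z)$, repeated application of $ad_A$ produces matrices of the form $A^a\be_k{\be_k}^T + \be_k{\be_k}^TA^a + (\text{lower-order cross terms})$, and by taking commutators with other rank-one generators $\be_j{\be_j}^T$ one can isolate pieces. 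The first step I would carry out is to build, inside $\LL(A,Z)$, the symmetric matrices $\be_k{\be_j}^T+\be_j{\be_k}^T$ for $k,j\in S$: since $\G(A)$ is connected and all nonzero off-diagonal entries of $A$ have the same sign, Lemma \ref{connect} gives $(A^{d(k,j)})_{kj}\ne 0$, so that $[\dots[\,\be_k{\be_k}^T, A],A],\dots]$ (applying $ad_A$ a total of $d(k,j)$ times, then commuting suitably with $\be_j{\be_j}^T$) yields a nonzero multiple of $\be_k{\be_j}^T+\be_j{\be_k}^T$ plus controllable correction terms that can be cleared by induction on distance.

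Once the "seed" symmetric matrices $\be_k{\be_k}^T$ (for $k\in S$) and the off-diagonal symmetric matrices connecting them are in $\LL(A,Z)$, the second step is to generate $A^m\be_k{\be_j}^TA^\ell$ for all admissible $m,\ell$. For this I would compute $ad_A$ powers applied to $\be_k{\be_j}^T+\be_j{\be_k}^T$: an induction shows that $ad_A^p(\be_k{\be_j}^T) = \sum_{a+b=p}(-1)^b\binom{p}{a}A^a\be_k{\be_j}^TA^b$, and then taking commutators against the rank-one projections $\be_k{\be_k}^T$, $\be_j{\be_j}^T$ (and other seeds) lets one peel off individual terms $A^m\be_k{\be_j}^TA^\ell$ one at a time, again by an induction that orders the terms by total degree $m+\ell$ and uses linear independence arising from the connectivity/sign hypothesis. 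Throughout, I only need $0\le m,\ell\le n-1$; higher powers reduce via Cayley–Hamilton and so introduce nothing new, which is consistent with the fact that $P(A,Z)$ is defined with that bound.

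The main obstacle will be organizing the double induction cleanly: I am simultaneously inducting on graph distance $d(k,j)$ to get the off-diagonal seeds into $\LL(A,Z)$, and inducting on the total exponent $m+\ell$ to extract $A^m\be_k{\be_j}^TA^\ell$, and the correction ("lower-order") terms produced at each stage must provably already lie in $\LL(A,Z)$ by a prior stage of the induction. The sign condition on off-diagonal entries is exactly what prevents cancellation of the leading term in each $ad_A$-expansion (via Lemma \ref{connect}), so the bookkeeping must be arranged so that at each step the genuinely new generator appears with a guaranteed nonzero coefficient while everything else is lower in the induction order. I would set up the induction so that the statement being proved at level $t$ is "$A^m\be_k{\be_j}^TA^\ell\in\LL(A,Z)$ for all $k,j\in S$ and all $m+\ell\le t$," and verify the base case $t=0$ (the seeds $\be_k{\be_j}^T+\be_j{\be_k}^T$) separately via the distance induction.
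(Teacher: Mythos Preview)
Your plan has the right instinct (build everything from commutators of the generators with powers of $A$), but it overlooks two simplifications that the paper exploits and that eliminate both of your inductions entirely.

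First, for the ``diagonal'' case $k=j$ the paper does not argue from scratch: it simply cites the proof of \cite[Lemma~1]{GS10}, which already shows that for any symmetric $A$ and any single vector $\bz$ one has $A^m\bz\bz^TA^\ell\in\LL(A,\{\bz\})$ for all $0\le m,\ell\le n-1$. Applying this with $\bz=\be_j$ for each $j\in S$ puts every $A^m\be_j\be_j^TA^\ell$ into $\LL(A,Z)$ immediately, with no induction on $m+\ell$.

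Second, and this is the point you are missing, once $A^{d(k,j)}\be_j\be_j^T\in\LL(A,Z)$ is available, a \emph{single} commutator gives the off-diagonal seed cleanly, with no correction terms at all:
\[
[\be_k\be_k^T,\,A^{d(k,j)}\be_j\be_j^T]
=\be_k(\be_k^TA^{d(k,j)}\be_j)\be_j^T-A^{d(k,j)}\be_j(\be_j^T\be_k)\be_k^T
=(\be_k^TA^{d(k,j)}\be_j)\,\be_k\be_j^T,
\]
because $\be_j^T\be_k=0$ for $k\ne j$. Lemma~\ref{connect} makes the scalar nonzero, so $\be_k\be_j^T\in\LL(A,Z)$ directly (the asymmetric matrix, not just $\be_k\be_j^T+\be_j\be_k^T$). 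Three further explicit commutators then yield $A^m\be_k\be_j^T$, $\be_k\be_j^TA^\ell$, and $A^m\be_k\be_j^TA^\ell$, each time with a single correction term already known to lie in $\LL(A,Z)$.

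By contrast, your proposed double induction is not set up in a way that closes. The ``correction terms'' arising from $ad_A^p(\be_k\be_k^T)$ followed by a bracket with $\be_j\be_j^T$ are of the form $A^a\be_k\be_j^T$ and $\be_j\be_k^TA^b$ with $a+b=d(k,j)$; these are not associated with pairs in $S$ of strictly smaller distance, so your distance induction has nothing to feed on. Likewise, the base case of your $m+\ell$ induction requires the individual matrices $\be_k\be_j^T$, not merely the symmetric combinations your seed step produces. The paper's route sidesteps all of this by choosing the right commutator.
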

\bpf 
 The proof of Lemma 1 in \cite{GS10} shows that for any real symmetric matrix $A$ and vector $\bz$,  $A^m\bz{\bz}^TA^\ell\in \LL(A,\{\bz\})$ for all $ m,\ell\in\{0,\dots, n-1\}$. Applying this, we obtain that   $A^m\be_j{\be_j}^TA^\ell\in \LL(A,Z)$  for all $j\in\{1,\dots, s\}, m,\ell\in\{0,\dots, n-1\}$. The result will follow if we are able to show that  $A^m \be_k {\be_j}^T A^\ell\in \LL(A,Z)$  for all $k,j\in\{1,\dots, s\}, m,\ell\in\{0,\dots, n-1\}$, with $k$  different from $j$.

 Consider the  distance $d(k,j)$ between the nodes $k$ and $j$ in $\G(A)$, which is  $\leq n-1$ because $\G(A)$ is
 connected. From the fact that both $\be_k \be^T_k$ and $A^{d(k,j)} \be_j \be_j^T$ are in $\LL(A,Z)$, we have  in $\LL(A,Z)$,
 \[[\be_k {\be_k}^T, A^{d(k,j)} \be_j{\be_j}^T]=\be_k{\be_k}^TA^{d(k,j)} \be_j{\be_j}^T-A^{d(k,j)}\be_j{\be_j}^T\be_k{\be_k}^T=({\be_k}^TA^{d(k,j)} \be_j)
\be_k{\be_j}^T.\]
It follows from Lemma \ref{connect} that
${\be_k}^TA^{d(k,j)}{\be_j} \ne 0$, and so $\be_k{\be_j}^T\in\LL(A,Z)$.

Then
\begin{eqnarray}
 [A^m\be_k{\be_k}^T,\be_k{\be_j}^T] &=& A^m\be_k{\be_k}^T\be_k{\be_j}^T-\be_k{\be_j}^TA^{m}\be_k{\be_k}^T      \nonumber \\
   &=& A^m\be_k{\be_j}^T-({\be_j}^TA^{m}\be_k)\be_k{\be_k}^T.\ \nonumber
\end{eqnarray}
So, $A^m\be_k{\be_j}^T\in\LL(A,Z)$.  Similarly,  $\be_k{\be_j}^TA^\ell\in\LL(A,Z)$.
Finally,
\begin{eqnarray}
 [A^m\be_k{\be_k}^T,\be_k{\be_j}^TA^\ell] &=& A^m\be_k{\be_k}^T\be_k{\be_j}^TA^\ell-\be_k{\be_j}^TA^{m+\ell}\be_k{\be_k}^T      \nonumber \\
   &=& A^m\be_k{\be_j}^TA^\ell-({\be_j}^TA^{m+\ell}\be_k)\be_k{\be_k}^T.\ \nonumber
\end{eqnarray}
So, $A^m\be_k{\be_j}^TA^\ell\in\LL(A,Z)$.
\epf

The following theorem establishes the connection between quantum Lie algebraic controllability and the rank condition for an extended walk matrix modeled on a graph.

\begin{thm} \label{mtxthm} Let $A\in\Hn(\R)$  such that   $\G(A)$ is connected and all the nonzero off-diagonal elements of $A$ have the same sign.  Let
$S\subseteq\{1,\dots,n\}$ and
$Z=\{\be_j:j\in S\}$ be a subset of standard basis vectors. 
Then
 $\rank \W(A,Z)=n$
if and only if $\LL(A,Z)=gl(n,\R)$.
\end{thm}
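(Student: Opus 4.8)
The plan is to prove the two implications separately, using the lemmas already established to bridge between the walk matrix, the associative algebra $\Span P(A,Z)$, and the Lie algebra $\LL(A,Z)$.

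\textbf{Forward direction: $\rank \W(A,Z)=n \implies \LL(A,Z)=gl(n,\R)$.} Assume $\rank \W(A,Z)=n$. By Lemma \ref{prod}, this is equivalent to $\Span P(A,Z)=\Rnn$. By Lemma \ref{lieprod}, under the graph hypotheses on $A$ (connectedness and common sign of off-diagonal entries), $\Span P(A,Z)\subseteq\LL(A,Z)$. Chaining these gives $\Rnn\subseteq\LL(A,Z)\subseteq gl(n,\R)=\Rnn$, hence $\LL(A,Z)=gl(n,\R)$. This direction is essentially a bookkeeping assembly of the preceding lemmas.

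\textbf{Converse: $\LL(A,Z)=gl(n,\R) \implies \rank \W(A,Z)=n$.} For $s=1$ this is exactly Theorem \ref{MainT}, so the content is the reduction from $s\ge 1$ to that case, or alternatively a direct argument. The approach I would take: suppose for contradiction that $\rank \W(A,Z)<n$. Then $\range \W(A,Z)=:\mathcal{V}$ is a proper subspace of $\Rn$; note $\be_j\in\mathcal{V}$ for each $j\in S$, and since $A\W(A,Z)$ has the same range as $\W(A,Z)$ up to the Cayley–Hamilton tail, $\mathcal{V}$ is $A$-invariant. Pick a nonzero $\bx\perp\mathcal{V}$, and consider $D:=\bx\bx^T$ (working over $\R$, or $\bx\bx^*$ over $\C$). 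Mimicking the proof of Theorem \ref{MainT}, I would show $D$ commutes with every element of $\Span P(A,Z)$: each such element has the form $A^m\be_k{\be_j}^TA^\ell$, and $DA^m\be_k = \bx(\bx^TA^m\be_k)=\bx\,\bx^T(A^m\be_k)=0$ since $A^m\be_k\in\mathcal{V}$ (using $A$-invariance and Cayley–Hamilton to handle $m\ge n$), and symmetrically ${\be_j}^TA^\ell D=0$. So $D$ commutes with every matrix in the associative algebra $\Span P(A,Z)$. Now I need to connect $\Span P(A,Z)$ back to $\LL(A,Z)=gl(n,\R)$: the point is that $\Span P(A,Z)$ is an associative algebra stable under the Lie bracket by $A$ and by the $\be_j{\be_j}^T$, and in fact the smallest such that contains all $\be_j{\be_j}^T$; one shows (as in Lemma \ref{lieprod} run in reverse, or via Lemma \ref{sln}) that $\LL(A,Z)$ is spanned by $A$ together with $\Span P(A,Z)$, so $\LL(A,Z)=gl(n,\R)$ forces $\Span P(A,Z)$ to have dimension at least $n^2-1$; since it also contains $\be_j{\be_j}^T$ of nonzero trace, $\Span P(A,Z)=\Rnn$. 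But then $D$ commutes with all of $\Rnn$, forcing $D$ to be a scalar multiple of $I_n$ by Schur's lemma, contradicting $\rank D=1$. Hence $\rank \W(A,Z)=n$.

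\textbf{Main obstacle.} The delicate step is the converse, specifically verifying that $\LL(A,Z)=gl(n,\R)$ really does force $\Span P(A,Z)=\Rnn$ rather than merely $\LL(A,Z)\subseteq\Span(A)+\Span P(A,Z)$ with possible collapse. One clean way is to invoke Lemma \ref{sln} with $B_i=\be_{j_i}{\be_{j_i}}^T$: the smallest ideal $\hat\LL$ of $\LL$ containing the $B_i$ equals $gl(n,\R)$ since $\tr B_i=1\ne 0$, and one checks that this ideal is contained in $\Span P(A,Z)$ because repeated brackets of the $B_i$ with $A$ stay inside the associative algebra generated by $A$ and the $B_i$, which by the Remark equals $\Span P(A,Z)$. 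This requires care that $\Span P(A,Z)$ is itself closed under bracketing with $A$, which follows from Cayley–Hamilton (to absorb powers $A^n,A^{n+1},\dots$ back into $\Span\{I,A,\dots,A^{n-1}\}$). Once that containment is in hand, the Schur's lemma finish is immediate.
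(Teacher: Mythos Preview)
Your proposal is correct and, at its core, reproduces the paper's proof. The forward direction is identical. For the converse, your ``Main obstacle'' paragraph is exactly what the paper does: invoke Lemma~\ref{sln} with $B_i=\be_{j_i}\be_{j_i}^T$ (noting $\tr B_i=1\ne 0$) to get $\hat\LL=gl(n,\R)$, observe that $\hat\LL\subseteq\Span P(A,Z)$ because $\Span P(A,Z)$ contains each $B_i$ and is stable under $ad_A$ (Cayley--Hamilton) and under $ad_{B_i}$ (associative algebra), hence $\Span P(A,Z)=\Rnn$, and then Lemma~\ref{prod} gives $\rank\W(A,Z)=n$.

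Two remarks on the packaging. First, your contradiction framework with $D=\bx\bx^T$ and Schur's lemma is entirely superfluous: the moment you establish $\Span P(A,Z)=\Rnn$ (which you do inside the contradiction anyway), Lemma~\ref{prod} already yields $\rank\W(A,Z)=n$ directly---no centralizer argument is needed. The paper's proof is accordingly a three-line direct argument. Second, your intermediate sentence ``$\dim\Span P(A,Z)\ge n^2-1$ and it contains $\be_j\be_j^T$ of nonzero trace, so $\Span P(A,Z)=\Rnn$'' is not quite right as stated: a generic hyperplane of $\Rnn$ contains plenty of matrices with nonzero trace. It \emph{can} be repaired by noting that $\Span P(A,Z)$ is in fact an ideal of $\LL(A,Z)=gl(n,\R)$ (it is $ad_A$- and $ad_{B_i}$-stable and, by Lemma~\ref{lieprod}, contained in $\LL$), and the only codimension-one ideal of $gl(n,\R)$ is $sl(n,\R)$. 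But your Lemma~\ref{sln} route already bypasses this, and that is the path the paper takes.
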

\bpf  
By Lemma \ref{prod}, $\Span P(A,Z)=\Rnn$ if
and only if $\rank \W(A, Z)=n$, so it suffices to show that  $\Span P(A,Z)=\Rnn$
if and only if $\LL(A,Z)=gl(n,\R)$.
By Lemma \ref{lieprod}, $\Span P(A,Z)\subseteq\LL(A,Z)$, so  $\Span P(A,Z)=\Rnn$ implies $\LL(A,Z)=gl(n,\R)$.   
For the converse, assume $\LL(A,Z)=gl(n,\R)$.  Then,
by Lemma \ref{sln}, $\hat\LL=gl(n,\R)$, where   $\hat\LL$ is the smallest ideal of $\LL(A,Z)$ that contains $\be_j{\be_j}^T,j=1,\dots,s$.   It is clear that $\hat\LL\subseteq\Span P(A,Z)$, so  $\Span P(A,Z)=\Rnn$.
\epf

\begin{cor} \label{mtxcor38} Let $A\in\Hn(\R)$  such that   $\G(A)$ is connected and all the nonzero off-diagonal elements of $A$ have the same sign, and let $S\subseteq \{1,\dots,n\}$.  Then
 $\rank \W(A,\{\be_j:j\in S\})=n$
if and only if $\lie{iA,\{i\be_j{\be_j}^T:j\in S\}}=u(n)$, i.e., the quantum system associated with the Hamiltonians $iA$ and $i\be_j \be_j^T$, $j=1,\ldots,s$, is controllable.
\end{cor}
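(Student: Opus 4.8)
The plan is to deduce Corollary \ref{mtxcor38} directly from Theorem \ref{mtxthm} together with Proposition \ref{cplxeq}, so that essentially no new work is required beyond chaining equivalences and inserting the control-theoretic interpretation. The hypotheses on $A$ (that $\G(A)$ is connected, real symmetric, and all nonzero off-diagonal entries have the same sign) are exactly those of Theorem \ref{mtxthm}, and $Z=Z_S=\{\be_j : j\in S\}$ is a set of standard basis vectors, so Theorem \ref{mtxthm} applies verbatim: $\rank \W(A,\{\be_j:j\in S\})=n$ if and only if $\LL(A,\{\be_j:j\in S\})=gl(n,\R)$.

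Next I would invoke Proposition \ref{cplxeq} with $A_1=A$ and $A_2,\dots,A_{s+1}$ the matrices $\be_j{\be_j}^T$ for $j\in S$ (note each $\be_j{\be_j}^T\in\Hn(\R)$, so the proposition applies): $\lie{A,\{\be_j{\be_j}^T:j\in S\}}=gl(n,\R)$ if and only if $\lie{iA,\{i\be_j{\be_j}^T:j\in S\}}=u(n)$. Recalling the definition $\LL(A,Z)=\lie{A,\bz_1{\bz_1}^T,\dots,\bz_s{\bz_s}^T}$ from \eqref{E1}, the left-hand side here is precisely $\LL(A,\{\be_j:j\in S\})=gl(n,\R)$. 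Combining with the previous paragraph gives the stated chain of equivalences: $\rank \W(A,\{\be_j:j\in S\})=n \iff \LL(A,\{\be_j:j\in S\})=gl(n,\R) \iff \lie{iA,\{i\be_j{\be_j}^T:j\in S\}}=u(n)$.

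Finally I would connect the condition $\lie{iA,\{i\be_j{\be_j}^T:j\in S\}}=u(n)$ to quantum controllability of the system \eqref{scromat} whose Hamiltonian switches among the values $\{A,\be_{j_1}{\be_{j_1}}^T,\dots,\be_{j_s}{\be_{j_s}}^T\}$. By the Lie algebra rank condition (stated in the introduction, citing \cite{JS}), that system is completely controllable precisely when the real Lie algebra generated by $\{iA, i\be_{j_1}{\be_{j_1}}^T,\dots,i\be_{j_s}{\be_{j_s}}^T\}$ equals $u(n)$ (or $su(n)$); here since $\tr(i\be_j{\be_j}^T)\neq 0$, the relevant algebra is $u(n)$ rather than $su(n)$, so the "$=u(n)$" formulation is the correct one. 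This identifies the last condition in the chain as the controllability statement, completing the corollary.

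I do not anticipate a genuine obstacle: every ingredient is already in place. The only point requiring a modicum of care is bookkeeping — making sure that $\be_j{\be_j}^T$ is symmetric so that Proposition \ref{cplxeq} legitimately applies, and that the relabeling of $S=\{j_1,\dots,j_s\}$ as an index set $\{1,\dots,s\}$ matches the notation of Theorem \ref{mtxthm}. Since $\be_j{\be_j}^T$ has trace $1\neq 0$, one should note $u(n)$ (not $su(n)$) is the target algebra, which is consistent with Lemma \ref{sln} and with the way controllability is phrased above. Thus the proof is a two-line composition of Theorem \ref{mtxthm}, Proposition \ref{cplxeq}, and the Lie algebra rank condition.
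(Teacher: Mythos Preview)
Your proposal is correct and matches the paper's approach: the corollary is stated without explicit proof in the paper, as it follows immediately from Theorem~\ref{mtxthm} combined with Proposition~\ref{cplxeq} (and the Lie algebra rank condition for the controllability interpretation). Your bookkeeping remarks about $\be_j\be_j^T\in\Hn(\R)$ and $\tr(\be_j\be_j^T)\ne 0$ are apt but not strictly needed for the equivalence as stated.
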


Observe that for any connected graph $G$, the adjacency matrix $A_G$ and the Laplacian matrix $L_G$ satisfy the hypotheses of Theorem \ref{mtxthm} and Corollary \ref{mtxcor38}.

 The result of \cite{GS10} for the case $s=1$ showing that $\rank \W(A, \{ \bz \})=n$ implies $\lie{iA,i\bz{\bz}^T}=u(n)$, (and the converse proved in Theorem \ref{MainT} in this paper)  were proved in reference to systems on graphs. The proofs however go through for an arbitrary symmetric   matrix $A$ and vector $\bz$. It is natural to ask whether the conditions on  the  matrix $A$ that we have used in Theorem \ref{mtxthm} are really necessary.
To this purpose, we can observe that the result is not true if we give up either of the hypotheses that 1) $\G(A)$ is connected or 2) the off-diagonal entries of  $A$ have the same sign, as shown in the next two examples.

\begin{ex}{\rm
To see the necessity of assuming that $\G(A)$ is connected, consider a   block diagonal matrix $A=\mtx{A_1 & 0 \\ 0 & A_2}$ with $A_1$ and $A_2$ symmetric matrices of dimensions $n_1$ and $n_2$, respectively, with $n_1+n_2=n$, and $\bz_1$ and $\bz_2$ two vectors that have zeros in the last $n_2$ or first $n_1$ entries, respectively, and such that the corresponding matrices $\W(A_1, \{\bz_1\})$ and   $\W(A_2, \{\bz_2\})$ have ranks $n_1$ and $n_2$, respectively. In this case, the walk matrix $\W(A, \{ \bz_1, \bz_2 \})$ has rank $n$, but the Lie algebra generated by $A$, $\bz_1\bz_1^T$, and $ \bz_2 \bz_2^T$ contains only block diagonal matrices.}
\end{ex}

\begin{ex}{\rm
To see the necessity of assuming that all nonzero off-diagonal entries of  $A$ have the same sign, consider $A=\mtx{0 & 1 & 0 & 1 \\
 1 & 0 & -1 & 0 \\
 0 & -1 & 0 & 1 \\
 1 & 0 & 1 & 0}$, and $Z=\{\be_1,\be_3\}$.   It is straightforward to verify that the walk matrix $\W(A, \{ \be_1, \be_3 \})$ has rank $4$.  However, $\rank \LL(A,\{\be_1,\be_3\})\le 8$, as can be seen as follows. Let

 \bea \LL: & = \Span(&\mtx{
 1 & 0 & 0 & 0 \\
 0 & 0 & 0 & 0 \\
 0 & 0 & 0 & 0 \\
 0 & 0 & 0 & 0
},
\mtx{
 0 & 0 & 0 & 0 \\
 0 & 1 & 0 & 0 \\
 0 & 0 & 0 & 0 \\
 0 & 0 & 0 & 1
},
\mtx{
 0 & 0 & 0 & 0 \\
 0 & 0 & 0 & 0 \\
 0 & 0 & 1 & 0 \\
 0 & 0 & 0 & 0
},
\mtx{
 0 & 0 & 0 & 0 \\
 1 & 0 & 0 & 0 \\
 0 & 0 & 0 & 0 \\
 1 & 0 & 0 & 0
},\\
& &
\mtx{
 0 & 1 & 0 & 1 \\
 0 & 0 & 0 & 0 \\
 0 & 0 & 0 & 0 \\
 0 & 0 & 0 & 0
},
\mtx{
 0 & 0 & 0 & 0 \\
 0 & 0 & 0 & 0 \\
 0 & 1 & 0 & -1 \\
 0 & 0 & 0 & 0
},
\mtx{
 0 & 0 & 0 & 0 \\
 0 & 0 & 1 & 0 \\
 0 & 0 & 0 & 0 \\
 0 & 0 & -1 & 0
},
\mtx{
 0 & 0 & 0 & 0 \\
 0 & 0 & 0 & 1 \\
 0 & 0 & 0 & 0 \\
 0 & 1 & 0 & 0
}).
\eea
Since $[B,C]\in\LL$ for all $B,C\in\LL$, $\LL$ is a Lie subalgebra of $gl(4,\R)$.  Clearly $\dim\LL\le 8$ and
$\LL(A,\{\be_1,\be_3\})\subseteq \LL$.
}\end{ex}

\section{Zero forcing and controllability}\label{Zerforc}
The {\em neighborhood} of $v\in V(G)$ is $N(v)=\{w\in V(G): w \mbox{ is adjacent to } v\}.$

\begin{thm}\label{DBzf}
Let $A\in\Hn(\R)$ such that $\G(A)$ is connected and all the nonzero off-diagonal entries of $A$ have the same sign. Let $V:=\{ 1,2,\ldots,n\}$ be the set of vertices for $\G(A)$, and $S\subseteq V$  be a zero forcing set of $\G(A)$.
Then
\[
\LL(A,\{\be_{j}{\be_j}^{T}: j\in S\})=gl(n,\R).
\]
\end{thm}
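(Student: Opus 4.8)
The plan is to leverage Theorem \ref{mtxthm}, which reduces the claim $\LL(A,\{\be_j{\be_j}^T:j\in S\})=gl(n,\R)$ to showing that $\rank\W(A,Z_S)=n$, where $Z_S=\{\be_j:j\in S\}$. So the real content is purely linear-algebraic: a zero forcing set $S$ for $\G(A)$ forces the extended walk matrix $\W(A,Z_S)$ to have full rank. Equivalently, by Lemma \ref{prod}, it suffices to show $\Span P(A,Z_S)=\Rnn$, but I expect the cleanest route is the walk-matrix formulation directly: show that $\range\W(A,Z_S)=\Rn$, i.e. there is no nonzero vector $\bx$ with $\bx^T A^m \be_j=0$ for all $j\in S$ and all $m\ge 0$ (Cayley--Hamilton lets us drop the restriction $m\le n-1$).

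First I would set up the contrapositive: suppose $\bx\ne 0$ lies in the left null space of $\W(A,Z_S)$, equivalently $(A^m\bx)_j = 0$ for every $j\in S$ and every $m\ge 0$. The key observation is that the subspace $\KK := \{\by\in\Rn : \by^T A^m\be_j = 0 \text{ for all } j\in S,\ m\ge0\}$ is $A$-invariant (it is the orthogonal complement of the $A$-invariant subspace spanned by the columns of $\W$), and every vector in $\KK$ vanishes on the coordinates indexed by $S$. Now I would run the zero forcing process on $\G(A)$ starting from the black set $S$ and prove, by induction on the steps of the process, that every $\by\in\KK$ has $\by_v=0$ for every vertex $v$ that gets colored black. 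The base case is the observation above. For the inductive step: suppose black vertex $u$ forces white neighbor $w$, so $w$ is the unique white neighbor of $u$; I want to conclude $\by_w=0$. Here is where I use $A$-invariance: since $\by\in\KK$ and $\KK$ is $A$-invariant, $A\by\in\KK$, and since $u$ is already black, $(A\by)_u = 0$. But $(A\by)_u = a_{uu}\by_u + \sum_{v\sim u} a_{uv}\by_v$. By the induction hypothesis $\by_u=0$ and $\by_v=0$ for every neighbor $v$ of $u$ except possibly $w$ (all neighbors of $u$ other than $w$ are black, since $w$ is the only white neighbor). Hence $0 = (A\by)_u = a_{uw}\by_w$, and since $uw$ is an edge of $\G(A)$ we have $a_{uw}\ne 0$, so $\by_w=0$.

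Since $S$ is a zero forcing set, the process colors all of $V$ black, so every $\by\in\KK$ is zero, i.e. $\KK=\{0\}$; thus $\W(A,Z_S)$ has trivial left null space, hence full rank $n$. Then Theorem \ref{mtxthm} (whose hypotheses — $\G(A)$ connected, off-diagonal entries of one sign, $Z$ a set of standard basis vectors — are exactly those assumed here) yields $\LL(A,Z_S)=gl(n,\R)$, completing the proof. I expect the only subtle point, and the step most worth stating carefully, is the $A$-invariance of $\KK$ together with the correct bookkeeping in the inductive step — namely that at the moment $u$ forces $w$, all neighbors of $u$ besides $w$ are already black and hence already known to kill $\by$; the hypothesis that the nonzero off-diagonal entries of $A$ all have the same sign is not actually needed for this argument (it was needed earlier to go from full rank to $\LL=gl(n,\R)$ via Lemma \ref{lieprod}), so I would simply invoke Theorem \ref{mtxthm} rather than reprove that direction. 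One stylistic choice: instead of left null vectors I could phrase everything in terms of the $A$-invariant subspace $W_S := \Span\{A^m\be_j : j\in S,\ m\ge0\}$ and show by the same forcing induction that $\be_v\in W_S$ for every $v$ that turns black — this is arguably more transparent and avoids transposes, and I would probably write it that way.
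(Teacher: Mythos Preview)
Your proof is correct, but it takes a genuinely different route from the paper's. You reduce to $\rank\W(A,Z_S)=n$ via Theorem~\ref{mtxthm} and then run the standard zero-forcing null-space argument (the same argument that shows $\Z(G)$ bounds maximum nullity): the left kernel of $\W(A,Z_S)$ is $A$-invariant and vanishes on $S$, and each force propagates a new zero coordinate. The paper instead works \emph{directly} inside the Lie algebra, never passing through the walk matrix: assuming inductively that $\be_k\be_k^T\in\LL$ for all currently black $k$, it invokes Lemma~\ref{lieprod} to get all $\be_k\be_\ell^T$ with $k,\ell$ black, then for a force $u\to w$ computes $[\be_u\be_u^T,A]$ to extract $\be_u\be_w^T-\be_w\be_u^T$ and two further commutators to obtain $\be_w\be_w^T\in\LL$. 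Your approach is more modular---it cleanly separates the combinatorial step (zero forcing $\Rightarrow$ full rank, which as you note does not even need the same-sign hypothesis) from the algebraic step (full rank $\Rightarrow$ $\LL=gl(n,\R)$, already done in Theorem~\ref{mtxthm}). The paper's approach is more self-contained and constructive at the Lie-algebra level, exhibiting explicitly how each force produces a new elementary matrix in $\LL$; it uses Lemma~\ref{lieprod} directly rather than the full strength of Theorem~\ref{mtxthm}.
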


\begin{proof}
After a (possibly empty) sequence of forces, denote by $T$ the set of currently black vertices, and assume that for all $k\in T$,
$\be_{k}{\be_k}^{T}\in\LL:=\LL(A,\{\be_{j}{\be_j}^{T}: j\in S\})$.  The hypotheses of   Lemma \ref{lieprod} are satisfied for $Z=\{\be_j : j\in T\}$, so
for all $k,\ell\in T$, $\be_{k}{\be_\ell}^{T}\in\LL$.

If $T \ne V$, then  there is a vertex $u\in T$ that has a unique neighbor $w$ outside
$T$. For that $u$ we have \[[\be_{u}\be_{u}^{T},A]=\sum_{m\in N(u)}\tilde a_{um}(\mathbf{e}_{u}\mathbf{e}_{m}^{T}-\mathbf{e}_{m}\mathbf{e}_{u}^{T}), \]
where $\tilde a_{um}:=a_{um}$ if $u<m$ and $\tilde a_{um}:=a_{mu}$ if $m<u$.
For all $m\in N(u)$ such that $m\ne w$,
$m\in T$, so $\be_{u}{\be_m}^{T}, \be_{m}{\be_u}^{T}\in\LL$.   Thus
$\mathbf{e}_{u}\mathbf{e}_{w}^{T}-\mathbf{e}_{w}\mathbf{e}_{u}^{T}\in\LL$.
Since
\[[\be_u{\be_u}^T, \mathbf{e}_{u}{\mathbf{e}_{w}}^{T}-\mathbf{e}_{w}{\mathbf{e}_{u}}^{T}]=\mathbf{e}_{u}{\mathbf{e}_{w}}^{T}+\mathbf{e}_{w}{\mathbf{e}_{u}}^{T},\]
 $\mathbf{e}_{w}\mathbf{e}_{u}^{T}, \mathbf{e}_{w}\mathbf{e}_{u}^{T}\in\LL$.  Then
\[[\mathbf{e}_{w}\mathbf{e}_{u}^{T},\mathbf{e}_{u}\mathbf{e}_{w}^{T}]=\mathbf{e}_{w}\mathbf{e}_{w}^{T}-\mathbf{e}_{u}\mathbf{e}_{u}^{T}\]
so $\mathbf{e}_{w}\mathbf{e}_{w}^{T}\in\LL$.
Since $S$ is a zero forcing set, we obtain $\mathbf{e}_{\ell}{\mathbf{e}_{m}}^{T}\in\LL$
for all $\ell,m\in V=\{1,\dots,n\}$, and thus we conclude that $\LL=gl(n,\R).$ \end{proof}

Applying Proposition \ref{cplxeq} we obtain the next corollary.

\begin{cor}
If $G$ is a connected graph, $A\in\Hn(\R)$,  all the nonzero off-diagonal entries of $A$ have the same sign, and  $S\subseteq V$ is a zero forcing set of $G$, then $\lie{iA,\{i\be_j{\be_j}^T:j\in S\}}=u(n)$ and the corresponding quantum system is controllable.
\end{cor}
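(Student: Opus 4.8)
The plan is to prove this corollary by combining the real-Lie-algebra statement of Theorem \ref{DBzf} with the translation principle recorded in Proposition \ref{cplxeq}. The bulk of the work has already been done: Theorem \ref{DBzf} gives, under exactly the hypotheses that $\G(A)$ is connected (equivalently, $G$ is connected), that all nonzero off-diagonal entries of $A$ have the same sign, and that $S\subseteq V$ is a zero forcing set of $G$, the conclusion $\LL(A,\{\be_j{\be_j}^T:j\in S\})=gl(n,\R)$. So my first step is simply to invoke Theorem \ref{DBzf} to obtain this equality, after observing that the stated hypotheses of the corollary are precisely those of the theorem (with the identification $G=\G(A)$ built into the definition of the graph of a matrix).

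The second step is to rewrite $\LL(A,\{\be_j{\be_j}^T:j\in S\})$ in the explicit generator form demanded by Proposition \ref{cplxeq}. By the definition in equation (\ref{E1}), this Lie algebra is exactly $\lie{A,\{\be_j{\be_j}^T:j\in S\}}$, i.e.\ the real Lie algebra generated by the finite list of symmetric matrices consisting of $A$ together with the rank-one projections $\be_j{\be_j}^T$ for $j\in S$. All of these generators lie in $\Hn(\R)$, which is exactly the setting required by Proposition \ref{cplxeq}.

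The third and final step is to apply Proposition \ref{cplxeq} to this generator list. Writing $A_1,\dots,A_k$ for the matrices $A$ and $\be_j{\be_j}^T$ ($j\in S$), Theorem \ref{DBzf} has supplied the left-hand side $\lie{A_1,\dots,A_k}=gl(n,\R)$, and the proposition then yields the equivalent statement $\lie{iA_1,\dots,iA_k}=u(n)$, which is precisely $\lie{iA,\{i\be_j{\be_j}^T:j\in S\}}=u(n)$. By the Lie algebra rank condition for the Schr\"odinger matrix equation (\ref{scromat}) recalled in the introduction, the equality of this generated Lie algebra with $u(n)$ is exactly the assertion that the quantum system with Hamiltonians $iA$ and $i\be_j{\be_j}^T$, $j\in S$, is completely controllable, giving the final clause of the corollary.

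I do not anticipate a genuine obstacle here, since every nontrivial ingredient is cited: the real statement is Theorem \ref{DBzf} and the complexification bridge is Proposition \ref{cplxeq}. The only point requiring minor care is the bookkeeping that ensures the generator set fed into Proposition \ref{cplxeq} matches, matrix for matrix, the generator set appearing in Theorem \ref{DBzf}, and that the controllability interpretation invoked at the end is the one established by the Lie algebra rank condition in Section \ref{sintro}; both are routine verifications rather than substantive difficulties.
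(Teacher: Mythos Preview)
Your proposal is correct and follows exactly the paper's approach: the paper derives the corollary by applying Proposition~\ref{cplxeq} to the conclusion of Theorem~\ref{DBzf}, just as you do. Your additional remark linking the $u(n)$ equality to controllability via the Lie algebra rank condition is a faithful unpacking of the final clause.
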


Note that the converse of Theorem \ref{DBzf} is false.

\begin{ex}{\rm Consider the path on four vertices $P_4$ with the vertices numbered in order.  The set $\{\be_2\}$ is not a zero forcing set for $P_4$.  However, \[\W(A_{P_4}, \{\be_2\})=\mtx{0 & 1 & 0 & 2 \\
 1 & 0 & 2 & 0 \\
 0 & 1 & 0 & 3 \\
 0 & 0 & 1 & 0} \mbox{ and }\rank  \W(A_{P_4},\{\be_2\})=4,\] so $\LL(A_{P_4},\{\be_2\})=gl(n,\R)$ by Theorem \ref{mtxthm}.
}\end{ex}

\section{Conclusion}\label{send}

Motivated by the control and dynamics of systems modeled  on networks both classical and quantum, we have established a connection between various tests of controllability and the notion of zero forcing in graph theory. Lie algebraic quantum controllability is necessary and sufficient for linear (Kalman-like) controllability of an associated system and both notions are implied by the zero forcing property of the associated set of vertices. Linear systems have a very well developed  theory \cite{Kailath} and it is an open question to investigate to what extent this analogy can be further used to discover properties of quantum systems and systems on networks.

\vspace{3mm}

{\bf Acknowledgements.} The authors would like to thank Mark Hunacek for  productive discussions concerning Lie algebras. This work has been
done while Daniel Burgarth was  supported by EPSRC grant EP/F043678/1. Domenico  D'Alessandro is  supported by NSF under  Grant No.
ECCS0824085. Simone Severini  is a Newton International Fellow.  Michael Young's postdoctoral fellowship is supported by NSF through DMS 0946431.


\end{document}